\def\BibTeX{{\rm B\kern-.05em{\sc i\kern-.025em b}\kern-.08em
    T\kern-.1667em\lower.7ex\hbox{E}\kern-.125emX}}
\tikzset{stretch/.initial=1}
\newcommand\drawloop[4][]%
\theoremstyle{definition}
\newtheorem{definition}{Definition}[section]
\newtheorem{theorem}{Theorem}
\title{A Parallel Dynamic Epistemic Perspective over\\ Muddy Children Puzzle}
\author{Bogdan Macovei
\institute{Department of Computer Science \\ Faculty of Mathematics and Computer Science\\
University of Bucharest, 
Academiei 14, 010014 Bucharest, Romania
\\ \\   
Research Center for Logic, Optimization and Security (LOS) \\ Department of Computer Science, Faculty of Mathematics and Computer Science, \\ University of Bucharest, Academiei 14, 010014 Bucharest, Romania\\
\url{los.cs.unibuc.ro}}
\email{bogdan.macovei@fmi.unibuc.ro}
}
\begin{document}
\maketitle
\begin{abstract}
Epistemic protocols represents a current field of interest, with numerous approaches still being studied.
In this paper we formalize parallel sessions of the The Muddy Children Puzzle using Public Observation Logic, a system that allows epistemic update. We consider agents with roles in multiple sessions and the information
update in all parallel sessions as new information is discovered in any particular session.
\end{abstract}

\section{Introduction}

In this article, we introduce a formalism to analyze the parallel execution of actions in the epistemic protocol of the \textit{Muddy Children Puzzle}, previously modeled using \textit{public announcement logic} in \cite{vanDitmarsch}. We consider agents with roles in multiple sessions and the information update in all parallel sessions as new information is discovered in any particular session. Although the formalism is specifically tailored to model this protocol, it can be generalized since its theoretical framework is based on \cite{hidden}, where a logical system starting from generic actions is defined. Our goal is to investigate how parallel sessions interact and how agents' information is modified not only within a single session but also in a parallel setting.

In the second section, we introduce the \textit{public observation logic} system, based on \cite{hidden}; in the third section, we model the \textit{Muddy Children Puzzle} protocol using this formalism, starting from the description in \cite{vanDitmarsch}; in the fourth section, we construct the parallel sessions and define a method of information propagation between these sessions to maintain consistency in the main parallel model; in the fifth section, we provide an example of such a parallel framework where we study the speed of information propagation and how the number of actions required to reach a result decreases with this update. Finally, we present some conclusions and future directions for further research.

\section{Preliminaries}

In this preliminary section, we will present the main results from \cite{hidden}, where a new logical system, \textit{public observation logic}, is defined. This system is built upon notions such as \textit{observations} and \textit{expectation models}, which are based on \textit{epistemic models}. These concepts will be used in the formalism we will propose later to define a modeling for the \textit{Muddy Children Puzzle}.

Let $\text{\footnotesize AGENT}$ be a finite set of agents, $\Phi$ be a set of formulas, with $\Phi_0 \subseteq \Phi$ the set of atomic propositions.

\begin{definition}[Epistemic model]
 An \textit{epistemic model} $\mathcal{M}$ is $\mathcal{M} = (W, R, V)$, where $W$ is a non-empty set of worlds, $R \subseteq W^2$ is a binary accessibility relation, $R = \{ R_a \mid a \in \text{\footnotesize AGENT}\}$, and $V : \Phi \to \mathcal{P}(W)$ is the valuation function, assigning to each formula the set of worlds where formula is true.
\end{definition}

Additionally, the authors of \cite{hidden} introduce \textit{observation expressions}, as regular expressions over a finite set of actions, named $\Sigma$. 

\begin{definition}[Observation expressions]
    Given a finite set of action symbols $\Sigma$, the language $\mathcal{L}_{obs}$ of \textit{observation expressions} is defined by: 
    \begin{align}
        \pi ::= \delta \mid
        \varepsilon \mid
        \pi \cdot \pi \mid 
        \pi + \pi \mid 
        \pi^* 
    \end{align}
    where $\delta$ is the empty action, with a corresponding empty set $\emptyset$ of observations, the constant $\varepsilon$ represents the empty string, and $a \in \Sigma$.
\end{definition}

\begin{definition}[Observations]
    Given an observation expression $\pi$, the corresponding \textit{set of observations}, that is denoted by $\mathcal{L}(\pi)$ is the set of finite string over $\Sigma$, defined as follows: 
    
    \begin{tabular}{ll}
    $\mathcal{L}(\delta) = \emptyset$ & $\mathcal{L}(\pi_1 \cdot \pi_2) = \{ vw \mid v \in \mathcal{L}(\pi_1) \text{ and } w \in \mathcal{L}(\pi_2) \}$ \\ 
    $\mathcal{L}(\varepsilon) = \{\epsilon\}$ & $\mathcal{L}(\pi_1 \cup \pi_2) = \mathcal{L}(\pi_1) + \mathcal{L}(\pi_2)$ \\ 
    $\mathcal{L}(a) = \{a\}$ & $\mathcal{L}(\pi^*) = \{ \varepsilon \} \cup \bigcup_{n > 0} \mathcal{L}(\underbrace{\pi ; \pi ; \dots ; \pi}_{n \text{ times}})$
\end{tabular}
\end{definition}

\begin{definition}[Epistemic expectation model]
    An \textit{epistemic expectation model} $\mathcal{M}_{exp}$ is a quadruple $\mathcal{M}_{exp} = (W, R, V, Exp)$, where $(W, R, V)$ is an epistemic model, also named the \textit{epistemic skeleton} of $\mathcal{M}_{exp}$, and $Exp : W \to \mathcal{L}_{obs}$ is an expected observation function, assigning to each state an observation expression $\pi$, such that $\mathcal{L}(\pi) \neq \emptyset$. 
\end{definition}

With these notions defined, there is introduced a new formalism, named \textit{Public observation logic}, that is a dynamic logic with knowledge operators, made to reason about knowledge via the matching of observations and expectations. 

\begin{definition}[Update by observation]
    Let $\alpha$ be an observation over $\Sigma$, and let $\mathcal{M} = (W, R, V, Exp)$ be an epistemic expectation model. The updated model $\mathcal{M}|_\alpha$ is defined as $\mathcal{M}|_\alpha = (W', R', V', Exp')$, where $W' := \{ w \mid \mathcal{L}(Exp(w) - \alpha) \neq \emptyset\}$, $R' = R|_{W'^2}$, $V' = V|_{W'}$, and $Exp'(w) = Exp(w) - \alpha$. Here, $\pi - \alpha$ is defined as $\pi - \alpha = \{ \beta \mid \alpha \beta \in \mathcal{L}(\pi) \}$, the right residuation with respect to the monoid $(\Sigma^*, \cdot, \varepsilon)$.
\end{definition}

\begin{definition}[Public observation logic]
    The formula $\varphi$ of POL (Public Observation Logic) are given by the following BNF:
    \begin{align}
    \varphi ::= \top \mid 
        p \mid
        \neg \varphi \mid
        \varphi \land \varphi \mid 
        K_a \varphi \mid 
        [\pi] \varphi
\end{align}
where $p \in \Phi_0$ and for any $a \in \text{\footnotesize AGENT}$, $K_a$ is the knowledge operator corresponding to agent $a$.
\end{definition}

\begin{definition}[Truth definition for POL]
    Given an epistemic expectation model $\mathcal{M} = (W, R, V, Exp)$, a state $w \in W$ and a POL-formula $\varphi$, the truth of $\varphi$ at $w$, denoted by $\mathcal{M}, w \models \varphi$ is defined as follows: 
    \begin{align}
    \mathcal{M}, w &\models p \iff w \in V(p) \\ 
    \mathcal{M}, w &\models \varphi \land \psi \iff \mathcal{M}, w \models \varphi \text{ and } \mathcal{M}, w \models \psi \\ 
    \mathcal{M}, w &\models \neg \varphi \iff \mathcal{M}, w \not \models \varphi \\ 
    \mathcal{M}, w &\models K_i \varphi \iff \text{for all } v \text{ such that } (w, v) \in R_i \text{ it holds } \mathcal{M}, v \models \varphi  \\ 
    \mathcal{M}, w &\models [\pi]\varphi \iff \text{for all } \alpha \in \mathcal{L}(\pi), \text{ if } \alpha \in \text{init}(Exp(w)), \text{ then } \mathcal{M}|_{\alpha}, w \models \varphi 
\end{align}
where $\alpha \in \text{init}(\pi)$ if and only if there is a $\beta \in \Sigma^*$ such that $\alpha \beta \in \mathcal{L}(\pi)$. 
\end{definition}



\section{Muddy Children Puzzle: formalization in POL}

The Muddy Children Puzzle represents a scenario where a group of children who have been playing outside is called into their house by their father. Some of the children have mud on their foreheads, others don't, and none of them know their own state. Knowing that they can all see each other, but don't know anything about their own situation, the father initially announces them that at least one child is muddy, then asks them who is certain that they have mud on their forehead. In this scenario, we assume that the children reason correctly from a logical point of view and know how to update their information based on their father's question. If a solution has not been reached yet, the father does nothing else but repeat the request until a final resolution is achieved, where all the muddy children are aware of their own state, so the ones who are not muddy will also know this fact. 

We will model the \textit{Muddy Children Puzzle} starting from the description in \cite{vanDitmarsch} (where the problem was modeled using \textit{Public announcement logic}), but using the formalism of \textit{Public observation logic}. In this case, instead of updating the model based on formulas, we will update it based on certain actions, and the main action consists of the successive questions that the father addresses.

Let $\text{\footnotesize AGENT}$ be a finite set of agents, with $|\text{\footnotesize AGENT}| = n_a$, that we will denote with the first letters of the alphabet, $a$, $b$, $c$, $\dots$ during the specification of our protocols. In order to formalize the \textit{Muddy Children Puzzle}, we will define one set of formulas $\Phi$, where $\Phi_0 \subseteq \Phi$ is the set of atomic propositions, and one set of actions $\Pi$, with $\Pi_0 \subseteq \Pi$ contains one atomic action, \textit{QF}, the father's question. Each time when the action is performed, the general knowledge is changed and the  agents enrich their individual  knowledge. We will number the father's questions to be able to update the model sequentially based on the information obtained at each step. 

For any $a \in \text{\footnotesize AGENT}$, $\Phi_0$ contains one atomic proposition $m_a$, that is true if $a$ is muddy, and false otherwise.

\begin{definition}
Given a finite set of atomic actions, $\Pi_0 = \{ \text{QF} \}$, we define the $\mathcal{L}_{\Pi}$ language by the following BNF grammar:
\begin{align}
    \pi ::= \lambda \mid
    \text{QF} \mid 
    \pi ; \pi \mid
    \pi \cup \pi \mid 
    \pi^* 
\end{align}
where $\lambda$ stands for the empty action, with a corresponding empty set of actions.
\end{definition}

We introduce the notation $\text{QF}_i := \underbrace{\text{QF} ; \dots ; \text{QF}}_{i \text{ times}}$

\begin{definition}
Given an action $\pi$, we define the corresponding set of actions $\mathcal{L}(\pi)$ as follows: 
\vspace*{0.2cm}

\begin{tabular}{ll}
    $\mathcal{L}(\lambda) = \emptyset$ & $\mathcal{L}(\pi_1 ; \pi_2) = \{ vw \mid v \in \mathcal{L}(\pi_1) \text{ and } w \in \mathcal{L}(\pi_2) \}$ \\ 
    $\mathcal{L}(\text{QF}) = \{ \text{QF} \}$ & $\mathcal{L}(\pi_1 \cup \pi_2) = \mathcal{L}(\pi_1) \cup \mathcal{L}(\pi_2)$ \\ 
    & $\mathcal{L}(\pi^*) = \{ \lambda \} \cup \bigcup_{n > 0} \mathcal{L}(\underbrace{\pi ; \pi ; \dots ; \pi}_{n \text{ times}})$
\end{tabular}
\end{definition}

In the following, let $\mathcal{M} = (W, R, V, Exp)$ be an epistemic expectations model, where the set of possible worlds $W$ consists of \textbf{ordered pairs} $(w_{a_1}, \dots, w_{n_a}) \in \{0,1\}^{n_a}$. Each $w_{a_i}$ for $a_i \in \text{\footnotesize AGENT}$ describe either if $a_i$ is muddy or not. We denote that property by the predicate $\textbf{muddy}$, defined as, for every agent $a$ and for every pair $w$,  $\textbf{muddy}(a, w) = 0$ if $a$ is not muddy and $\textbf{muddy}(a, w) = 1$ if $a$ is muddy.


\begin{definition}[State property]
    Let $a \in \text{\footnotesize AGENT}$ and let $w \in W$, where $\text{\footnotesize AGENT} = \{a_1, \dots, a, \dots, a_{n_a}\}$, where $a$ is on the $p$-th position, $0 < p \leq N$. Then $\textbf{muddy}(a, w) = \textbf{proj}_p(w)$, where \textbf{proj} is the projection function: for a pair $x = (x_1, \dots, x_n)$, $\textbf{proj}_i(x)$ returns the value of the $i$-th component $x_i$, $0 < i \leq n$.
\end{definition}

\begin{definition}[Model property]
    Let $a \in \text{\footnotesize AGENT}$. We define $\textbf{muddy}(a)$ as:
    \begin{align}
        \textbf{muddy}(a) = \left\{\begin{array}{ll}
            0, &\text{ if } \bigcap \{ \textbf{muddy}(a, w) \mid w \in W \} = \{ 0 \} \\ 
            1, &\text{ if } \bigcap \{ \textbf{muddy}(a, w) \mid w \in W \} = \{ 1 \} \\ 
            \text{undefined}, &\text{ if } \bigcap \{ \textbf{muddy}(a, w) \mid w \in W \} = \emptyset  
        \end{array}
        \right.
    \end{align}
    By abuse of notation,  when $\textbf{muddy}(a)$ is defined and there is no risk for confusion, we will use $\textbf{muddy}$ as a predicate.
    
    This predicate can also be defined as:
    \begin{align}
        \textbf{muddy}(a) = 1 &\iff \text{for all } w \in W, \mathcal{M}, w \models K_a m_a \\ 
        \textbf{muddy}(a) = 0 &\iff \text{for all } w \in W, \mathcal{M}, w \models K_a \neg m_a \\ 
        \textbf{muddy}(a) = \text{undefined} &\iff \text{otherwise
        } 
    \end{align}
\end{definition}
\begin{definition}
    Let $\mathcal{M} = (W, R, V)$ be an epistemic (Kripke) model, and $Exp : W \to \mathcal{L}_{\Pi}$ an expectation function that assigns to each state an action $\pi$ such that $\mathcal{L}(\pi) \neq \emptyset$. Then, $\mathcal{M}_{exp} = (\mathcal{M}, Exp)$ is an epistemic expectations model. 
\end{definition}



\label{def:update-session-model}
\begin{definition}
    Let $w$ be an action over $\Pi$ and let $\mathcal{M}_{Exp} = (W, R, V, Exp)$ be an expectation model. The updated model is $\mathcal{M}|_\alpha = (W', R', V', Exp')$, where $W' = \{ v \mid \mathcal{L}(Exp(v) - \alpha) \neq \emptyset \}$, $R' = R \cap W'^2$, $V' = V|_{W'}$ and for any $v \in W$, $Exp'(v) = Exp(v) - \alpha$.
\end{definition}

Each action $\text{QF}_i$, $i \geq 0$, corresponds to the $i$-th question of the father. For $i = 0$, we have $\mathcal{M}|_{\text{QF}_0} := \mathcal{M}$. Starting from $i = 1$, the states are eliminated sequentially. For $i = 1$, the state $(0,\dots,0) \in W^{n_a}$ is eliminated, as $\text{QF}_1$ is the question that announces the existence of at least one muddy child, so $\bigcup_{a \in \text{\footnotesize AGENT}} m_a$ is true: $\mathcal{M}|_{QF_1} = (W' := W - \{ 0^{n_a} \}, R|_{W'^2}, V|_{W'}, Exp)$, so after the first questions, we have states that contains at least one component that is $1$. If we continue this process, we obtain that for $\mathcal{M}|_{QF_2}$ there are removed all the states containing just one muddy agent. In general, for $\mathcal{M}|_{QF_i}$ a set of corresponding worlds with pairs $w \in W_i$ such that $w$ contain at least $i$ values of $1$: $\sum_{j} \textbf{proj}_j(w) \geq i$. This fact is supported by what happens during the father's questions: he repeats the questions because the agents knows whether the other agents are muddy or not, while none knows their own state, which means that each new question confirms to the agents that there is at least one more muddy agent they must take into account during the inference process. At this point, we have that 
\begin{align}
    \mathcal{M}|_{\text{QF}_i} = (W' := W - \{ v \in W \mid \sum_j \textbf{proj}_j(v) < i\}, R|_{W'^2}, V|_{W'}, Exp')
\end{align}

Formulas are defined by the following BNF grammar:
\begin{align}
    \varphi ::= p \mid
        \neg \varphi \mid
        \varphi \lor \varphi \mid
        K_a \varphi \mid 
        [\pi]\varphi 
\end{align}
where $p \in \Phi_0$ is an atomic formula, and $a \in \text{\footnotesize AGENT}$.

The interpretation of formulas is as follows:
\begin{align}
    \mathcal{M}, w &\models p \iff w \in V(p) \\ 
    \mathcal{M}, w &\models \varphi \land \psi \iff \mathcal{M}, w \models \varphi \text{ and } \mathcal{M}, w \models \psi \\ 
    \mathcal{M}, w &\models \neg \varphi \iff \mathcal{M}, w \not \models \varphi \\ 
    \mathcal{M}, w &\models K_i \varphi \iff \text{for all } v \text{ such that } (w, v) \in R_i \text{ it holds } \mathcal{M}, v \models \varphi  \\ 
    \mathcal{M}, w &\models [\pi]\varphi \iff \text{for all } \alpha \in \mathcal{L}(\pi), \text{ if exists } v \text{ such that } Exp(v) = \alpha ; \pi_1 ; \dots ; \pi_n, \\ \nonumber & \ \ \ \ \ \text{ then } \mathcal{M}|_\alpha, w \models \varphi 
\end{align}

The deductive system contains all instances of propositional tautologies, to which are added the following axioms (\cite{vanDitmarsch}, \cite{kozen}): 

\begin{tabular}{lll}
    $(A_1)$ & $K_a (\varphi \to \psi) \to (K_a \varphi \to K_a \psi)$ & distributivity of K \\ 
    $(A_2)$ & $K_a \varphi \to \varphi$ & truth \\
    $(A_3)$ & $K_a \varphi \to K_a K_a \varphi$ & positive introspection \\ 
    $(A_4)$ & $\neg K_a \varphi \to K_a \neg K_a \varphi$ & negative introspection \\ 
    $(A_5)$ & $[\pi](\varphi \to \psi) \to ([\pi]\varphi \to [\pi]\psi)$ & distributivity of $\pi$ \\ 
    $(A_6)$ & $[\pi](\varphi \land \psi) \leftrightarrow [\pi]\varphi \land [\pi]\psi$ & distributivity over conjunction \\ 
    $(A_7)$ & $[\pi_1; \pi_2] \varphi \leftrightarrow [\pi_1][\pi_2]\varphi$ & sequential operator \\ 
    $(A_8)$ & $[\pi_1 \cup \pi_2] \varphi \leftrightarrow [\pi_1]\varphi \land [\pi_2]\psi$ & choice \\
    $(A_9)$ & $\varphi \land [\pi][\pi^*]\varphi \leftrightarrow [\pi^*]\varphi$ & Kleene star\\
    $(A_{10})$ & $\varphi \land [\pi^*](\varphi \to [\pi]\varphi) \to [\pi^*]\varphi$ & induction axiom 
\end{tabular}

The deduction rules are: 
{ \large 
\begin{tabular}{ccc}
    $\text{\footnotesize MP }\frac{\varphi \hspace*{0.2cm} \varphi \to \psi}{\psi}$ & 
    $\text{\footnotesize NEC } \frac{\varphi}{K_a \varphi}$ & 
    $\text{\footnotesize GEN } \frac{\varphi}{[\pi]\varphi}$
\end{tabular}
}

\begin{theorem}
\label{numbermuddy}
    For $n > 1$, there are $n+1$ questions needed to discover $n$ muddy children.
\end{theorem}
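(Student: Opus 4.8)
The plan is to leverage the characterization, already established just above, that after the $i$-th question the surviving worlds are exactly those containing at least $i$ muddy agents, i.e.\ $\mathcal{M}|_{\text{QF}_i}$ has domain $W_i = \{ w \in W \mid \sum_j \textbf{proj}_j(w) \geq i \}$. I would fix a world $w^\ast$ in which exactly $n$ agents are muddy, and let $a$ be any muddy agent at $w^\ast$. The whole argument then reduces to tracking when $a$ comes to know $m_a$, which by the truth clause for $K_a$ amounts to tracking when every $R_a$-alternative of $w^\ast$ that disagrees with $w^\ast$ about $a$'s own forehead has been deleted.

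First I would isolate the single relevant alternative. Since $a$ cannot observe its own forehead but sees all the others, the only world $a$ confuses with $w^\ast$ is the world $w'$ obtained from $w^\ast$ by flipping $a$'s component from $1$ to $0$; this $w'$ contains exactly $n-1$ muddy agents. Running $w'$ through the elimination schedule, the characterization gives that $w'$ survives in $\mathcal{M}|_{\text{QF}_i}$ precisely while $i \leq n-1$ and is deleted as soon as $i = n$. Consequently $\mathcal{M}|_{\text{QF}_{n-1}}, w^\ast \not\models K_a m_a$, since the falsifying alternative $w'$ is still present and still $R_a$-linked to $w^\ast$ (both worlds survive, so the restricted relation keeps the edge), whereas $\mathcal{M}|_{\text{QF}_n}, w^\ast \models K_a m_a$; and as this holds uniformly for every muddy agent $a$, the $n$-th update is exactly the moment at which all muddy agents privately come to know their state.

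The step I expect to be the crux is the passage from private knowledge to the question count, which hides an unavoidable off-by-one. The certainty produced by $\mathcal{M}|_{\text{QF}_n}$ is only \emph{made manifest} when the father next poses his question: the $k$-th question is answered in the model $\mathcal{M}|_{\text{QF}_{k-1}}$ that the previous rounds have carved out, and a muddy agent can reply affirmatively only in a model where it already validates $K_a m_a$. Matching $k-1 = n$ forces $k = n+1$, so the muddy children first answer at the $(n+1)$-th question, which is precisely what ``discover'' means here. The lower bound closes by the same computation: at question $n$ the governing model is $\mathcal{M}|_{\text{QF}_{n-1}}$, in which $w'$ still blocks $K_a m_a$, so no muddy agent can answer earlier, and the strict chain $W \supsetneq W_1 \supsetneq \cdots \supsetneq W_n$ shows no round is wasted. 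The hypothesis $n>1$ is what guarantees at least one genuine silent round, so that the schedule advances exactly one level per question; the degenerate single-child case, where the lone muddy agent already knows after the bare announcement, is excluded precisely to avoid this boundary artefact.
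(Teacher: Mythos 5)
Your proof is correct, and it takes a genuinely different route from the paper's. The paper argues by induction on $n$: the base case $n=1$ shows two questions suffice (the first removes $0^{n_a}$; the lone muddy agent, seeing only clean foreheads, then knows its state and answers the second), and the induction step claims that adjoining one more muddy agent leaves a pair of worlds $(w,0)$ and $(w,1)$ indistinguishable to the new agent, which exactly one further question resolves. You instead give a direct, non-inductive argument: from the elimination schedule $W_i = \{ w \in W \mid \sum_j \textbf{proj}_j(w) \geq i \}$ you read off that the unique falsifying $R_a$-alternative $w'$ (with $n-1$ muddy agents) is deleted exactly at stage $n$, so $K_a m_a$ first holds in $\mathcal{M}|_{\text{QF}_n}$, and you then convert the knowledge stage into a question count via the convention that question $k$ is answered in $\mathcal{M}|_{\text{QF}_{k-1}}$, forcing $k = n+1$. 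What your approach buys: both directions of ``needed'' (no muddy agent can answer before question $n+1$; all can answer at question $n+1$) are explicit, and you surface the off-by-one that the paper's proof silently glosses over --- the paper asserts $\mathcal{M}|_{\text{QF}_{n+1}} \models \bigwedge_a K_a m_a$ even though, under its own schedule, the muddy agents' knowledge is already attained at stage $n$ (and a literal update at stage $n+1$ would delete the actual world); your answer-timing convention is precisely the reading that reconciles the schedule with the count $n+1$, and it agrees with the paper's own base case. What the paper's induction buys: it mirrors the classical textbook argument and isolates the marginal effect of adding one muddy child, though its induction step is stated loosely (``just one world left''), whereas your single-alternative analysis per agent is sharper and also explains the role of the hypothesis $n>1$.
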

\begin{proof} The proof for this theorem is well-known \cite{muddy}, but we will sketch it for our system.

    Let $\mathcal{M} = (W, R, V, Exp)$ be an epistemic model for \textit{Muddy Children Puzzle} and let $n$ be the number of muddy agents. We want to prove that, after $n + 1$ questions, every agent knows either if it is muddy or not:
    $$\mathcal{M}|_{\text{QF}_{n + 1}} \models \bigwedge_{\substack{a \in \text{\footnotesize AGENT}}} K_a m_a$$
    We will prove that by induction over $n$. 
    \begin{itemize}
        \item $\textit{base case } (n = 1) \Rightarrow$ there are 2 father's questions QF needed to discover one muddy agent. After the first one, the state $0^n$ is removed. After the second question, there is just one muddy agent $a_m$ that sees $n - 1$ non-muddy other agents, and with the $0^n$ state removed, this agent knows that is muddy:
        $$\bigwedge_{\substack{a \in \text{\footnotesize AGENT} \\ a \neq a_m}} K_{a_m} \neg m_a \land K_a m_{a_m}$$
        \item $\textit{induction step} \Rightarrow$ there are already $n$ asked questions and, based on that, $n - 1$ identified muddy agents. We want to prove that if we add a new muddy agent, there will be one additional question. In this $\mathcal{M}|_{\text{QF}_n}$ model, where is just one world left, $w \in W_{n}$, such that there are $n - 1$ values of $1$. If we have an additional agent, which we will represent as the last component of the pair, without losing generality, then we will not be able to distinguish between states $(w, 0)$ and $(w, 1)$. With a new question, the agent is able to make the inference and to figure wheter it is muddy or not.
    \end{itemize}
\end{proof}

\section{Muddy Children Puzzle with parallel sessions}

In the parallel setup, we consider that the puzzle unfolds simultaneously in multiple sessions. Each session is just a regular sequence of actions, so for each group, there is a father who addresses the questions, and the model updates based on these actions. However, there are agents who can be simultaneously present in multiple such groups, which makes them agents playing in multiple sessions simultaneously. In this case, instead of just having a number of independent parallel sessions, the sessions interact with each other through the knowledge of agents who play simultaneously in at least two sessions. If an agent finds out from one session that they are muddy (or not), they actually possess this information in all other sessions as well.

Let $\text{\footnotesize SESSION}$ be a finite set of sessions. For every session $s \in \text{\footnotesize SESSION}$, we have a finite set of agents $\text{\footnotesize AGENT}_s$. Given a finite set of atomic actions on sessions, $\Pi_0 = \{ \text{QF}^s \} |_{s \in \text{\footnotesize SESSION}}$, we define the $\mathcal{L}_{\Pi}$ language by the following BNF grammar:
\begin{align}
    \pi ::= \lambda \mid
        \text{QF}^s \mid
        \pi ; \pi \mid
        \pi \cup \pi \mid
        \pi^* 
\end{align}

where $\lambda$ stands for the empty action, $s \in \text{\footnotesize SESSION}$ and $a \in \text{\footnotesize AGENT}_s$. 

\begin{definition}[Session model]
    A session model $\mathcal{M}_s$ is defined as an expectation model $\mathcal{M}_s$, where $\mathcal{M}_s = (W_s, R_s, V_s, Exp_s)$ and $s \in \text{\footnotesize SESSION}$ is a session; $W_s$ is a finite set of worlds on the session $s$, $R_s \subseteq W_s \times W_s$ is the binary accessibility relation between worlds, $V_s : \Phi \to W_s$ is the valuation function and $Exp_s : W_s \to \mathcal{L}_\Pi$ the expectation assignment for states.
\end{definition}


\begin{definition}[Parallel model]
    A parallel model $\mathcal{M}$ is defined as $\mathcal{M} = \mathcal{M}_{s_1} \times \mathcal{M}_{s_2} \times \dots \times \mathcal{M}_{s_{n_s}} = \bigtimes_{i=1}^{n_s} \mathcal{M}_{s_i}$, where $\text{\footnotesize SESSION} = \{ s_1, s_2, \dots, s_{n_s} \}$, and for an arbitrary $s \in \text{\footnotesize SESSION}$, $\mathcal{M}_s = (W_s, R_s, V_s, Exp_s)$ is a session model.
\end{definition}

Suppose that we have a parallel model $\mathcal{M}$ that consists of $n_s$ session models, so $\mathcal{M} = \bigtimes_i \mathcal{M}_i$. We know that, for every model, we have an initial set of worlds $W_s = \{ 0, 1\}^{|\text{\footnotesize AGENT}_s|}$, where $0$ means that the property doesn't hold (for an agent $a$, we have that
$a$ is not muddy), and $1$ means that the property holds (for an agent $a$, we have that $a$ is muddy). The $(\text{\footnotesize AGENT}_s)_{s \in \text{\footnotesize SESSION}}$ sets are not mutual disjunctive: for $s_i, s_j \in \text{\footnotesize SESSION}$, $s_i \neq s_j$, is it possible that $\text{\footnotesize AGENT}_{s_i} \cap \text{\footnotesize AGENT}_{s_j} \neq \emptyset$. In that case, imagine that an action $\alpha_{s_i}$ \textbf{occurs} in the $i$-th model, $0 < i \leq |\text{\footnotesize SESSION}|$ (so we have a formula $[\beta_1; \beta_2; \dots; \alpha_{s_i}; \dots; \beta_n]\varphi$, where $\beta_1, \dots, \beta_n \in \Pi$ are also actions that occurs in the $i$-th model, and $\varphi \in \Phi$), that changes the model with respect to updated model definition; we will have that model updated, so $\mathcal{M}_i := \mathcal{M}|_{\alpha_{s_i}}$. But there is the following problem: imagine that in the new $\mathcal{M}_i$ model, we obtained a new set of worlds $W_i$ that, for an arbitrary agent $a$, we know whether $\textbf{muddy}(a)$ is either 0 or 1, and that agent also appears in other session models $\mathcal{M}_{k_1}, \mathcal{M}_{k_2}, \dots, \mathcal{M}_{k_p}$. In that case, we have to also update these models with respect to the new restrictions. 

We define $\textbf{propagate}(\alpha_{s_i}, \mathcal{M}_{s_j}) = (W'_j, R_j|_{W'^2_j}, V_j|_{W'_j}, Exp_j)$, where $W'_j$ contains all the worlds from $W_j$, but without the worlds that doesn't have the same value of the property for the agent $a$ as in $\mathcal{M}_i|_{\alpha_{s_i}}$. If the position for the agent $a$ is $i_a$ and the property value is 0 (1), we will remove from $W_j$ all the worlds that have 1 (0) on the $i_a$-th position.

\begin{definition}[Propagate actions]
Let $s_i \in \text{\footnotesize SESSION}$ and let $\alpha_{s_i}$ be an action that occurs on the $i$-th session. We define the propagation on the $\mathcal{M}_{s_j}$ models, ($s_i \neq s_j$ and for any $j$ in this case, $s_j \in \text{\footnotesize SESSION}$) by:
\begin{align}
    \textbf{propagate}(\alpha_{s_i}, \mathcal{M}_{s_j}) = (W'_j, R_j|_{W'^2_j}, V_j|_{W'_j}, Exp_j)
\end{align}
where $W'_j$ is defined as:
\begin{align}
    W'_j := W_j - \bigcup_{\substack{a \in \text{\footnotesize AGENT}_{s_i} \cap \text{\footnotesize AGENT}_{s_j} \\ \textbf{muddy}_{s_i}(a) \text{ is defined}}} \{ v \in W_j \mid \textbf{muddy}_{s_j}(a, v) \neq \textbf{muddy}_{s_i}(a)\}
\end{align}
where $\textbf{muddy}_{s_j}(a, v)$ is $\textbf{muddy}(a, v)$ on the $j$-th session, and $\textbf{muddy}_{s_i}(a)$ is $\textbf{muddy}(a)$ on the $i$-th session.
\end{definition}

\begin{definition}[Updated parallel model]
    Let $\mathcal{M}$ be a parallel model, and $\alpha_{s_i}$ an action that occurred on the $i$-th session, $0 < i \leq |\text{\footnotesize SESSION}|$. Then \begin{align}
        \mathcal{M}|_{\alpha_{s_i}} = \bigtimes_{j < i}\textbf{propagate} \big{(}\alpha_{s_i}, \mathcal{M}_{s_j}\big{)} \times \mathcal{M}|_{\alpha_{s_i}} \times \bigtimes_{i < j}\textbf{propagate} \big{(}\alpha_{s_i}, \mathcal{M}_{s_j}\big{)}
    \end{align}
    is the updated parallel model.
\end{definition}

\section{Application}

In this section, we present a concrete parallel setting, and we study how the session models interact. We have the following setup:
\begin{enumerate}
    \item $\text{\footnotesize SESSION} = \{ s_1, s_2, s_3 \}$;
    \item $\text{\footnotesize AGENT}_1 = \{ a, b \}$, $\text{\footnotesize AGENT}_2 = \{ b, c, d \}$ and $\text{\footnotesize AGENT}_3 = \{ a, d \}$;
    \item $\mathcal{M} = \mathcal{M}_1 \times \mathcal{M}_2 \times \mathcal{M}_3$;
    \item suppose that the agents $a$, $c$ and $d$ are muddy, and $b$ is not, so in $\mathcal{M}$ holds that $\textbf{muddy}(a) \land \neg \textbf{muddy}(b) \land \textbf{muddy}(c) \land \textbf{muddy}(d)$.
\end{enumerate}

In the following, we represent the models: 

{
\hspace*{2cm} 
\begin{tabular}{ll}
    {
    \large 
    \begin{tikzpicture}
    \draw[gray, thick] (2, 2) -- (2, 4);
    \draw[gray, thick] (2, 4) -- (4, 4);
    \draw[gray, thick] (4, 4) -- (4, 2);
    \draw[gray, thick] (2, 2) -- (4, 2);
    \filldraw[black] (2,2) circle (2pt) node[anchor=east]{($\substack{m_a = 0 \\ m_b = 0}$)};
    \filldraw[black] (2,4) circle (2pt) node[anchor=east]{($\substack{m_a = 0 \\ m_b = 1}$)};
    \filldraw[black] (4,2) circle (2pt) node[anchor=west]{($\substack{\mathbf{m_a = 1} \\ \mathbf{m_b = 0}}$)};
    \filldraw[black] (4,4) circle (2pt) node[anchor=west]{($\substack{m_a = 1 \\ m_b = 1}$)};

    {\small \filldraw[black] (3,1.8) circle (0pt) node[anchor=center]{$R_a$};}
    {\small \filldraw[black] (3,4.2) circle (0pt) node[anchor=center]{$R_a$};}
    {\small \filldraw[black] (1.75,3) circle (0pt) node[anchor=center]{$R_b$};}
    {\small \filldraw[black] (4.25,3) circle (0pt) node[anchor=center]{$R_b$};}
    
    \filldraw[black] (3,3) circle (0pt) node[anchor=center]{$\mathcal{M}_1$};
    \end{tikzpicture}
} & {
    \large 
    \begin{tikzpicture}
    \draw[gray, thick] (2, 2) -- (2, 4);
    \draw[gray, thick] (2, 4) -- (4, 4);
    \draw[gray, thick] (4, 4) -- (4, 2);
    \draw[gray, thick] (2, 2) -- (4, 2);
    \filldraw[black] (2,2) circle (2pt) node[anchor=east]{($\substack{m_a = 0 \\ m_d = 0}$)};
    \filldraw[black] (2,4) circle (2pt) node[anchor=east]{($\substack{m_a = 0 \\ m_d = 1}$)};
    \filldraw[black] (4,2) circle (2pt) node[anchor=west]{($\substack{m_a = 1 \\ m_d = 0}$)};
    \filldraw[black] (4,4) circle (2pt) node[anchor=west]{($\substack{\mathbf{m_a = 1} \\ \mathbf{m_d = 1}}$)};

    {\small \filldraw[black] (3,1.8) circle (0pt) node[anchor=center]{$R_a$};}
    {\small \filldraw[black] (3,4.2) circle (0pt) node[anchor=center]{$R_a$};}
    {\small \filldraw[black] (1.75,3) circle (0pt) node[anchor=center]{$R_d$};}
    {\small \filldraw[black] (4.25,3) circle (0pt) node[anchor=center]{$R_d$};}
    
    \filldraw[black] (3,3) circle (0pt) node[anchor=center]{$\mathcal{M}_3$};
    \end{tikzpicture}
}
\end{tabular}
}

\vspace*{0.5cm}

{
\hspace*{2.5cm}
\begin{tikzpicture}
    \large
    \draw[gray, thick] (2, 2) -- (2, 5);
    \draw[gray, thick] (2, 5) -- (5, 5);
    \draw[gray, thick] (5, 5) -- (5, 2);
    \draw[gray, thick] (2, 2) -- (5, 2);
    \draw[gray, thick] (4, 4) -- (4, 7);
    \draw[gray, thick] (4, 7) -- (7, 7);
    \draw[gray, thick] (7, 7) -- (7, 4);
    \draw[gray, thick] (4, 4) -- (7, 4);
    \draw[gray, thick] (2, 5) -- (4, 7);
    \draw[gray, thick] (5, 5) -- (7, 7);
    \draw[gray, thick] (2, 2) -- (4, 4);
    \draw[gray, thick] (5, 2) -- (7, 4);
    \filldraw[black] (2,2) circle (2pt) node[anchor=east]{$\bigg{(}\substack{m_b = 0 \\ m_c = 0 \\ m_d = 0}\bigg{)}$};
    \filldraw[black] (2,5) circle (2pt) node[anchor=east]{$\bigg{(}\substack{m_b = 0 \\ m_c = 1 \\ m_d = 0}\bigg{)}$};
    \filldraw[black] (5,2) circle (2pt) node[anchor=west]{$\bigg{(}\substack{m_b = 1 \\ m_c = 0 \\ m_d = 0}\bigg{)}$};
    \filldraw[black] (5,5) circle (2pt) node[anchor=west]{$\bigg{(}\substack{m_b = 1 \\ m_c = 1 \\ m_d = 0}\bigg{)}$};
    \filldraw[black] (4,4) circle (2pt) node[anchor=east]{$\bigg{(}\substack{m_b = 0 \\ m_c = 0 \\ m_d = 1}\bigg{)}$};
    \filldraw[black] (4,7) circle (2pt) node[anchor=east]{$\bigg{(}\substack{\mathbf{m_b = 0} \\ \mathbf{m_c = 1} \\ \mathbf{m_d = 1}}\bigg{)}$};
    \filldraw[black] (7,4) circle (2pt) node[anchor=west]{$\bigg{(}\substack{m_b = 1 \\ m_c = 0 \\ m_d = 1}\bigg{)}$};
    \filldraw[black] (7,7) circle (2pt) node[anchor=west]{$\bigg{(}\substack{m_b = 1 \\ m_c = 1 \\ m_d = 1}\bigg{)}$};

    {\small \filldraw[black] (3.5,1.8) circle (0pt) node[anchor=center]{$R_b$};}
    {\small \filldraw[black] (3.5,5.2) circle (0pt) node[anchor=center]{$R_b$};}
    {\small \filldraw[black] (5.5,7.2) circle (0pt) node[anchor=center]{$R_b$};}
    {\small \filldraw[black] (5.5,4.2) circle (0pt) node[anchor=center]{$R_b$};}

    {\small \filldraw[black] (1.75,3.5) circle (0pt) node[anchor=center]{$R_c$};}
    {\small \filldraw[black] (4.75,3.5) circle (0pt) node[anchor=center]{$R_c$};}
    {\small \filldraw[black] (4.25,5.5) circle (0pt) node[anchor=center]{$R_c$};}
    {\small \filldraw[black] (7.25,5.5) circle (0pt) node[anchor=center]{$R_c$};}

    {\small \filldraw[black] (2.75,6.2) circle (0pt) node[anchor=center]{$R_d$};}
    {\small \filldraw[black] (5.8,6.2) circle (0pt) node[anchor=center]{$R_d$};}
    {\small \filldraw[black] (2.75,3.2) circle (0pt) node[anchor=center]{$R_d$};}
    {\small \filldraw[black] (5.8,3.2) circle (0pt) node[anchor=center]{$R_d$};}
    
    \filldraw[black] (4, 3) circle (0pt) node[anchor=center]{$\mathcal{M}_2$};
\end{tikzpicture}
}

Consider the scenario in which we run this models independently, so each agent has its own scope for every scenario. Using the results of Theorem \ref{numbermuddy}, we know that for the first model are 2 questions needed (so $\mathcal{M}_1 \models [QF_2^1] (K_a m_a \land K_b \neg m_b)$), for the second one there are 3, and for the last one also 3 questions needed. We conclude that if we analyze this scenario in a sequential manner, there are 8 question needed to solve the puzzle. Our target is to see if the number of questions can be minimised if we run these models in parallel. 

In the parallel setting, the order of actions is \textit{non-deterministic}. We will analyze a possible situation. Suppose that the first actions that occurs are the father's question for the first model. We already provided that are only two question needed to solve the puzzle, so 
\begin{align}
    \mathcal{M}_1|_{QF_2} = \big{(}\{(1, 0)\}, \emptyset, V_1|_{\{(1,0)\}}, Exp_1'\big{)}
\end{align}
Bearing in mind that we run the model in parallel, instead of just updating the $\mathcal{M}_1$ model, we have to update the entire model, $\mathcal{M}$. In that case, we have to \textbf{propagate} the new states, so $\mathcal{M}_2$ and $\mathcal{M}_3$ will be modified by keeping just the states in which $m_a$ holds and $m_b$ does not. 
\begin{align*}
    \textbf{propagate}(\text{QF}_1, \mathcal{M}_2) &= (W'_2, R_2' := R_2|_{W'_2}, V_2' := V_2|_{W'_2}, Exp_2) \\ 
    &= (W_2 - \{ v \in W_2 \mid \textbf{muddy}_{s_2}(b, v) \neq 0 \},  R_2', V_2', Exp_2) \\ 
    &= (W'_2 := \{(000), (010), (011), (001)\}, R_2', V_2', Exp_2)
\end{align*}
\begin{align*}
    \textbf{propagate}(\text{QF}_1, \mathcal{M}_3) &= (W'_3, R_3' := R_3|_{W'_3}, V_3' := V_3|_{W'_3}, Exp_3) \\ 
    &= (W_3 - \{ v \in W_3 \mid \textbf{muddy}_{s_3}(a, v) \neq 1 \}, R_3', V_3', Exp_3) \\ 
    &= ( \{ W'_3 := (10), (11) \}, R_3', V_3', Exp_3)
\end{align*}

We represent the new $\mathcal{M}|_{QF_1}$ model as:

{
\hspace*{1.5cm}
\begin{tikzpicture}
    \filldraw[black] (0,4) circle (2pt) node[anchor=east]{($\substack{\mathbf{m_a = 1} \\ \mathbf{m_b = 0}}$)};
    \filldraw[black] (0,1) circle (0pt) node[anchor=center]{$\mathcal{M}_1|_{QF_1}$};
    
    \draw[gray, thick] (3, 2) -- (3, 5);
    \draw[gray, thick] (5, 4) -- (5, 7);
    \draw[gray, thick] (3, 5) -- (5, 7);
    \draw[gray, thick] (3, 2) -- (5, 4);
    \filldraw[black] (3,2) circle (2pt) node[anchor=east]{$\bigg{(}\substack{m_b = 0 \\ m_c = 0 \\ m_d = 0}\bigg{)}$};
    \filldraw[black] (3,5) circle (2pt) node[anchor=east]{$\bigg{(}\substack{m_b = 0 \\ m_c = 1 \\ m_d = 0}\bigg{)}$};
    \filldraw[black] (5,4) circle (2pt) node[anchor=west]{$\bigg{(}\substack{m_b = 0 \\ m_c = 0 \\ m_d = 1}\bigg{)}$};
    \filldraw[black] (5,7) circle (2pt) node[anchor=west]{$\bigg{(}\substack{\mathbf{m_b = 0} \\ \mathbf{m_c = 1} \\ \mathbf{m_d = 1}}\bigg{)}$};

    {\small \filldraw[black] (2.75,3.5) circle (0pt) node[anchor=center]{$R_c$};}
    {\small \filldraw[black] (5.25,5.5) circle (0pt) node[anchor=center]{$R_c$};}
    {\small \filldraw[black] (3.75,6.1) circle (0pt) node[anchor=center]{$R_d$};}
    {\small \filldraw[black] (4.1,2.8) circle (0pt) node[anchor=center]{$R_d$};}
    
    \filldraw[black] (4, 1) circle (0pt) node[anchor=center]{$\textbf{propagate}(\text{QF}_1, \mathcal{M}_2)$};

    \draw[gray, thick] (8, 5) -- (8, 3);
    \filldraw[black] (8,3) circle (2pt) node[anchor=west]{($\substack{m_a = 1 \\ m_d = 0}$)};
    \filldraw[black] (8,5) circle (2pt) node[anchor=west]{($\substack{\mathbf{m_a = 1} \\ \mathbf{m_d = 1}}$)};

    {\small \filldraw[black] (8.25,4) circle (0pt) node[anchor=center]{$R_d$};}
    
    \filldraw[black] (8.5,1) circle (0pt) node[anchor=center]{$\textbf{propagate}(\text{QF}_1, \mathcal{M}_3)$};
\end{tikzpicture}
}

Suppose that the next action is father's question in $\mathcal{M}_3$. The agent $a$ already knows that $\textbf{muddy}(a)$ holds, so by seeing $d$ can answer the question by $\textbf{muddy}(a) \land \textbf{muddy}(d)$. We also have a \textbf{propagation}, that transforms $\mathcal{M}_2$ into a model with two worlds, $W_2 = \{ (0, 1, 1), (0, 0, 1) \}$. After the father's question, both $b$ and $d$ cand solve the puzzle, by seeing that $c$ is also muddy. In this setting, we only needed four father's questions instead of eight. 

\section*{Conclusion and Further Work} 

The formalism for representing parallel models presented in this article, using the \textit{Muddy Children Puzzle} as an example, is based on POL (public observation logic \cite{hidden}) and builds upon the modeling in PAL (public announcement logic \cite{vanDitmarsch}). In POL, observations are seen as strings over an alphabet $\Sigma$, which we replaced in our formalism with actions. For the specific problem we presented, the only actions considered are represented by the father's questions $\text{QF}$. Although our aim was to demonstrate how information can be propagated in parallel sessions for a simple case, this approach can be generalized and applied to the analysis of security protocols.

In \cite{delp}, we introduced a formalism for analyzing security protocols that use symmetric key cryptography (called DELP, \textit{dynamic epistemic logic for protocols}), also starting from POL but providing a different semantics for actions ($[\pi]\varphi$). An idea for future work is to combine DELP with the method of transmitting information in parallel across multiple sessions to study various types of attacks. This integration could enable a more comprehensive analysis of security protocols, incorporating the dynamics of information updates and potential attacks in a parallel setting. By combining the concepts from DELP and the parallel model, security analysts can investigate how various security protocols perform under different conditions and explore possible vulnerabilities in a more intricate scenario.

\bibliography{generic}
\bibliographystyle{eptcs}

\end{document}